\documentclass[12pt]{article}
\textwidth 15.4 cm
\textheight 21.5 cm
\topmargin 0.5cm
\evensidemargin 2 cm
\oddsidemargin 3 mm

\newcommand{\qed}{\rule[-0.2ex]{0.3em}{1.4ex}}

{\par\vspace{0.5ex}\noindent{\bf Proof:}\hspace{0.5em}}%
{\nopagebreak
\strut\nopagebreak
\hspace{\fill}\qed\par\medskip\noindent}
\newenvironment{proof}
{\par\vspace{0.5ex}\noindent{\bf Proof:}\hspace{0.5em}}%
{\nopagebreak
\strut\nopagebreak
\hspace{\fill}\qed\par\medskip\noindent}
{\par\vspace{0.5ex}\noindent{\bf Proof Attempt:}\hspace{0.5em}}%
{\nopagebreak
\strut\nopagebreak
\hspace{\fill}\qed\par\medskip\noindent}
\newtheorem{theorem}{Theorem}

\newtheorem{lemma}{Lemma}

\newtheorem{definition}{Definition}
\newtheorem{corollary}{Corollary}

\usepackage{amsmath}
\usepackage{amsfonts}
\usepackage{graphicx}


\usepackage[numbers,sort&compress]{natbib}
%


\DeclareMathOperator*{\argmin}{arg\,min}

\newcommand{\IR}{{\rm\hbox{I\kern-.15em R}}}
\newcommand{\reals}{{\rm\hbox{I\kern-.15em R}}}
\newcommand{\IN}{{\rm\hbox{I\kern-.15em N}}}
\newcommand{\IZ}{{\sf\hbox{Z\kern-.40em Z}}}

\newcommand{\E}{\mbox{\sf {I\kern-.15em E}}}

\newcommand{\Prob}{\mbox{\sf \hbox{I\kern-.15em P}}}

%
%
%
%
%

\begin{document}
\bibliographystyle{ieeetr}

\title{{\bf Foundations for Wash Sales}}
\author{
Phillip G. Bradford\thanks{University of Connecticut at Stamford, Department of Computer Science and Engineering, Stamford Connecticut. phillip.bradford@uconn.edu, phillip.g.bradford@gmail.com} 
}

\maketitle

\begin{abstract}
Consider an ephemeral sale-and-repurchase of a security resulting in the same position before the sale and after the repurchase.
A sale-and-repurchase is a wash sale if these transactions result in a loss within $\pm 30$ calendar days.
Since a portfolio is essentially the same after a wash sale, any tax advantage from such a loss is not allowed.
That is, after a wash sale a portfolio is unchanged so any loss captured by the wash sale is 
deemed to be solely for tax advantage and not investment purposes.

This paper starts by exploring variations of the birthday problem to model wash sales. 
The birthday problem is: Determine the number of independent and identically distributed random variables required so there is a probability of at
least $\frac{1}{2}$ that two or more of these random variables share the same outcome.
This paper gives necessary conditions for wash sales based on variations on the birthday problem.
Suitable variations of the birthday problem are new to this paper.
This allows us to answer questions such as:
What is the likelihood of a wash sale in an unmanaged portfolio where purchases and sales are independent, uniform, and random?
Portfolios containing options may lead to wash sales resembling these characteristics.
This paper ends by exploring the Littlewood-Offord problem as it relates capital gains and losses with wash sales.
\end{abstract}
\section{Introduction}
\label{Intro}

Wash sales impact a portfolio's tax liabilities.
Determining the likelihood of wash sales is also important for understanding investment strategies
and for comparing actively and passively managed portfolios.
Wash sales apply to investors, but not to market makers. 

Taxes play a significant role in economics and finance.
Taxes influence behavior, shape the engineering of financial transactions,
and sometimes have unintended consequences.
Therefore, thoughtful analysis is imperative for taxes.
This paper adds firm mathematical foundations to aid the understanding of wash sale taxes.

The main goal of this paper is:
To provide foundations for certain wash sales - in cases when they may occur as well as the capital gain implications.
This may also help differentiate managed funds and unmanaged index funds in terms of wash sales.

Wash sales are sometimes created by the exercise of options, thus a portfolio manager may not be able to avoid a wash sale in some contexts.
For example, suppose an in-the-money American-style put option is written in a portfolio.
Provided this option remains in-the-money, it may be exercised by its holder\footnote{Options, like shares of stock, are fungible
and there are specific option exercise assignment allocation methods used to allocate exercised options~\cite{FINRA}.}
at anytime up to its expiry.
If the exercise of this put option replaces shares sold at a loss in the prior 30 days, then
this is a wash sale.
This option's exercise is beyond control of the portfolio manager. 

The foundations given here start with variations of the classical birthday problem from probability theory~\cite{vM,F,CLRS}.
This work has implications on wash sales.
Also, the Littlewood-Offord problem~\cite{LO,E1945,TV} is applied to understand capital gains for certain wash sales.
The Littlewood-Offord problem is viewed from the perspective of the probabilistic method.

For convenience, let $[n] \equiv \{ \ 1, 2, \cdots, n \ \}$.

Suppose a security is sold at a loss on day $d_2$. This sale is a wash sale if substantially the same security is purchased
within $\pm 30$~calendar days from $d_2$, see for example~\cite{IRS550}.

\begin{definition}[US wash sale~\cite{IRS550}]
{\sf
Consider three dates $d_1, d_2, d_3 : d_1 \leq d_2$ and $d_1 \leq d_3$ where $|d_3 - d_2| \leq 30$ calendar days.

Suppose $s$ shares of a security are purchased on date $d_1$ at price $p_1$.

At some later date $d_2$, $s$ shares are sold for price $p_2 < p_1$. Thus, the $s$ shares are sold at a loss.
Then within $\pm 30$ days on date $d_3$, $s$ shares are repurchased for price $p_3$.

Since $|d_3 - d_2| \leq 30$ days, then the next adjustments must be made~\cite{IRS550}:

\begin{enumerate}

\item The loss $p_1 - p_2$ is not permissible for taxes. That is, this loss may not be subtracted from profits or gains and it may not be used to get a 
lower tax rate.

\item The cost-basis of the shares repurchased on $d_3$ is set to $p_3 + (p_1 - p_2)$. The shares purchased on $d_3$ have the start of their holding 
period reset to $d_1$.

\end{enumerate}
}
\label{wash-sale-definition}
\end{definition}

%
%
%

Short positions may also be wash sales. For example, consider holding a short position of 100 shares of a security starting on date $d_1$ in a 
portfolio $\Pi$.
Then suppose this short position is closed at a loss by purchasing 100 shares on day $d_2$. Once this position is closed on day $d_2$,
then $\Pi$ contains no shares of this security.
Next re-short another 100 shares of substantially the same security on $d_3$ where $|d_3 - d_2| \leq 30$ days.
These transactions leave the portfolio the same while getting a tax advantage for the loss. This tax advantage is also disallowed by
the wash sale rules.

%
%
%

Consider a wash sale as described by Definition~\ref{wash-sale-definition}, where $(p_1 - p_2) + p_3 > p_1$ or in other words $p_3 > p_2$. 
Suppose the shares are sold at price $p_4 > (p_1 - p_2) + p_3 > p_1$ at
the later date $d_4 \geq d_3$. 
In the case with the wash sale, there is a capital gain of $p_4 - \left[ (p_1 - p_2) + p_3 \right]$ which is smaller than
the capital gain $p_4 - p_1$ if the wash sale had not occurred.
Capital gains are taxable.
A capital gain $p_4 - p_1$ is from the single purchase of the shares for price $p_1$ on $d_1$ and the single sale of the shares
on date $d_4$ for price $p_4$, thus skipping the sale at a loss and repurchase.

This means such a wash sale gives $p_4 - p_1 - \left[ p_4 - \left[ (p_1 - p_2) + p_3 \right] \right]$ or $p_3 - p_2$ less taxable income
than a single purchase of the security at price $p_1$ on date $d_1$ and a single sale for price $p_4$ on $d_4$.
Of course, a wash sale's loss is not allowed.

%
%
%
%

Wash sales may be avoided by restricting each security in a portfolio to be either purchased or sold only every 31 calendar days.
This restriction may not be suitable for many portfolios.
In a portfolio containing options, it may be impossible to maintain this restriction.

It has also been suggested, e.g.~\cite{JM}, wash sales may be avoided by purchasing or selling (moderately) correlated, but not substantially the same,
securities.
That is, if a security is sold at a loss then purchase a different but correlated security within 30 days maintaining some of a portfolio's characteristics
while keeping the tax advantage.

Historically many securities are assumed to only trade on about $n=252$ business days per year~\cite{H}.
Although reflecting on global markets one may assume there are
$n=365$ trading days.

\subsection{Background}

There has not been much research on wash sales, e.g.,~\cite{JM}.
There is important work on taxation and its investment implications.
Take, for example, ~\cite{C83,C84} and~\cite{DSZ}.

The birthday problem is classical. 
According to a blog post by Pat B~\cite{PWorld}
the birthday problem may have originally been given by Harold Davenport as cited in~\cite{BC} and 
later published by~\cite{vM}.
In any case, von Mises gave the first published version to the best of our knowledge.

Bounds of day counts for the birthday problems include~\cite{N} who gives bounds for birthdays of 
distance $d$ for both linear years as well as cyclic years. 
In a cyclic year, 1-January is a single day from 31-December of the same year.
Bounds for birthdays of distance $d$ for cyclic years are given by~\cite{AM}.

The birthday problem applied to boys and girls (random variables with different labels) are discussed in~\cite{CN} 
as well as~\cite{P}.
That is, how many birthdays are shared by one or more boys and one or more girls?
A comprehensive view is provided by~\cite{NS} including stopping problems with the boy-girl birthday problem.
Non-uniform bounds for online boy-girl birthday problems are given by~\cite{GH} and~\cite{S}.

Tight bounded Poisson approximations for birthday problems are given by~\cite{CDM}.
Poisson approximations to the binomial distribution for the boy-girl birthday problem is given by~\cite{P}.
 A Stein-Chen Poisson approximation is used by~\cite{AGG} to solve variations of the standard birthday problem.
Matching and birthday problems are given by~\cite{DG}.
Incidence variables are used to study birthday problems with Pareto-type distributions in~\cite{BPSW}.

Applications of the birthday problem include: computer security~\cite{NS,Stinson,BPSW,GH},
public health and epidemiology~\cite{SGMG}, psychology, DNA sequence alignment, experiments, and games~\cite{DM,DG_m}.
Summaries of work on the birthday problem are in~\cite{DG_m}, ~\cite{DM}, and~\cite{R}.

Results on the expectation for getting $j$ different letter $k$-collisions are given by~\cite{FGT}.
Their results are expressed as truncated exponentials or gamma functions.

\subsection{Structure of this Paper}

Section~\ref{bday_problem} reviews variants the birthday problem applied here. 
First the classical birthday problem is discussed. 
Next this section progresses through the $\pm d$ birthday problem.
After the definition and key results are given about the $\pm d$ birthday problem, the
boy-girl birthday problem is explored.
Finally, the $\pm d$ boy-girl birthday problem is defined and several bounds are derived as they relate to a necessary condition for
wash sales.

Subsection~\ref{Wash-sale-example-1} gives an example of wash sales based on boy-girl birthday collisions of a single day.

Section~\ref{General-Wash-Sales} generalizes results of the previous sections. In particular, it shows how to compute
$B_d(n,b,g)$, the number of $b$ boys and $g$ girls that give a probability of $\frac{1}{2}$ or more where a boy and a girl have birthdays
within $d$ days of each other over $n$ days.

Subsection~\ref{Wash-sale-example-2} gives an example of wash sales based on boy-girl birthday collisions over a range of $\pm d = 30$ days.

Finally, Section~\ref{VarCGCL} explores how wash sales impact capital gains and losses.
Since wash sales are capital losses, they may offset capital gains.
Several results, including the Littlewood-Offord problem, are applied to capital gains and losses as they may be impacted by wash sales.

\section{The Birthday Problem and Wash Sales}
\label{bday_problem}

The birthday problem is often applied to finding the probability of coincidences. So there is a rich literature on variations of the 
birthday problem~\cite{DM,DG_m}.
Asset sales are often viewed as carefully selected.
However, portfolios using American-style options may exhibit asset sales or purchases beyond the control of the portfolio managers.

\begin{definition}[Birthday-Collision]
{\sf
Given two random variables $X_1, X_2$ mapping respectively to $x_1, x_2$ in the same range $[n]$, then
a {\em birthday-collision} is when $x_1 = x_2$.
}
\end{definition}

To model random wash sales, this paper assumes independent identically distributed random variables. 
A common statement of the birthday problem is:

\begin{definition}[Birthday Problem]
{\sf
Consider $n$ days in a year and $k$ independent identically distributed ({\em iid}) uniform random variables whose range is $[n]$ and $n \geq k$. 
What is the probability $B(n,k)$ of at least one birthday-collision among these $k$ random variables?
}
\label{standardBirthday}
\end{definition}

A key question is: Over $n$ consecutive days for what integer $k$ does
$\displaystyle \argmin_{k} \left\{ B(n,k) \geq \frac{1}{2} \right\}$ hold for $k$ iid uniform random variables?
In other words, given $n$ days, what is the least $k$ iid uniform random variables so that $B(n,k) = \frac{1}{2}$ ?

Solutions to this basic variation of the birthday problem are well known. 
The probability $B(n,k)$ is the compliment of the probability of $k$ iid uniform random variables having no birthday-collisions.
Therefore, if there are no birthday-collisions, then $k$ birthdays can be in ${n \choose k} \, k$! permutations
out of all possible $n^k$ mappings of the $k$ random variables onto $[n]$.
In other words, the ${n \choose k}$ subsets of $k$ distinct elements of $[n]$ is the exact number of subsets the $k$ variables 
may map to without a collision. These $k$ variables may be ordered in $k!$ permutations.
That is,

\begin{eqnarray*}
B(n,k) & = & 1-{n \choose k} \frac{k!}{n^k} \ \ = \ \ 1-\frac{n!}{(n-k)!}\cdot\frac{1}{n^k},
\end{eqnarray*}

\noindent
for $n \geq k$ and $B(n,k) = 1$ otherwise.

Starting with $n$ and a probability $p = B(n,k)$, then computing $k$ is often done using the inequality
$1 - x \leq e^{-x}$.
In particular, the smallest $k$ giving a probability of $\frac{1}{2}$ that there is at least one birthday-collision requires $k$ to be
roughly $\sqrt{2(\ln 2)n}$ or about $1.18\sqrt{n}$.
See for example,~\cite{vM,MR95,MU}.

Another classical approach is to look at the random variable $X$ as the sum of all birthday-collisions of $k$ people over $n$ days,
see for example~\cite{P85,BHS,P,DG}.
A concise exposition is given in~\cite{P85} which we follow.
Presume the birthday day of person $i \in [k]$ is given by the random variable $Y_i \in [n]$.
Since a potential birthday collision is a Bernoulli trial, so $X$ is binomially distributed.
Thus, $X \in \{ \, 0, 1, 2, \cdots, {k \choose 2} \, \}$ where ${k \choose 2}$ is the maximum number of potential birthday-collisions.
The expectation of the maximum number of birthday collisions possible is ${k \choose 2}$ with probability 
$\frac{1}{n} = \Prob[ Y_i =t  | Y_j = t], \ \ t \in [n]$ where $\{ i,j \} \subseteq [k]$.
The expected maximum number of birthday-collisions is $\frac{1}{n}{k \choose 2}$.
If $n$ is sufficiently larger than $k$, then $X$ is approximately Poisson where $\lambda = \frac{1}{n}{k \choose 2}$.
Thus, $\Prob[X \geq 1] \approx 1 - e^{-{k \choose 2}/n}$.

In the case of the $\pm d$ birthday problem, if two random variables $X_1, X_2$ map within $d$ days of each other, then
this is a $\pm d$ birthday-collision~\cite{N}.

Two birthdays $x_1$ and $x_2$ of distance $|x_1 - x_2|$ demark a span of size $1 + |x_1 - x_2|$.
For example, $|4\_{\mbox{\bf July}} - 3\_{\mbox{\bf July}}| = 1$, so these dates are in a $\pm d = \pm 2$ span, but not in a span of $\pm d = \pm 1$.

The next definition is based on~\cite{N,AM,DM}.

\begin{definition}[$\pm d$ Birthday Collisions] 
{\sf
Consider $n$ days in a year, spans of less than $\pm d$ days, and $k$ iid uniform random variables with range $[n]$: Then $B_d(n,k)$ is the probability
at least two such random variables have a $\pm d$ {\em birthday-collision}. 
That is, these two random variables have ranges in {\em less than} $d$ days of each other.
}
\end{definition}

In $n$ days with a $\pm d$ span, then $\displaystyle \argmin_k \left\{ B_d(n,k) \geq \frac{1}{2} \right\}$ gives the smallest $k$
so there is a probability of at least $\frac{1}{2}$ where at least two such random variables are fewer than $d$ days from each other.

\begin{definition}[Blocks of days] %
{\sf
Let $i: k > i > 1$.
Suppose birthdays are ordered as $x_1 \leq x_2 \leq \cdots \leq x_k$, then for a birthday $x_i$ its {\em nearest birthday pairs}
are $(x_{i-1}, x_i)$ and $(x_i, x_{i+1})$.
There are no birthdays between $x_{i-1}$ and $x_i$ and
there are no birthdays between $x_i$ and $x_{i+1}$.

A {\em block} of days contains a single birthday on one of its end-points. 
The birthday $x_i$ is associated with two blocks: $(x_{i-1}, x_i]$ and $[x_i, x_{i+1})$.
}
\end{definition}

The days between $x_1$ and $x_2$ form a block of size $|x_1 - x_2|$ since there are no birthdays between $x_1$ and $x_2$.
Thus, two nearest birthday pairs contained in a span of $\pm d$ are separated by a block of size $d-1$.

Take $k$ iid uniform random variables and consider $\pm d$ birthday-collisions over $[n]$ days.
Naus~\cite{N} gives the next idea: If there are no $\pm d$ birthday-collisions, then there must be at least size $d-1$ blocks of no birthdays between
each nearest birthday pair. This gives a total of $(k-1)(d-1)$ days with no birthdays in $k-1$ contiguous blocks of at least $d-1$ days each.
Therefore, if there are no $\pm d$ birthday-collisions, then $k$ birthdays can be in ${n - (k-1)(d -1) \choose k} \, k!$ permutations
out of all possible $n^k$ mappings of the $k$ random variables.
Thus, to get the probability of at least one $\pm d$ birthday collision, take the compliment of the probability of having no $\pm d$ birthday-collisions.
The next result follows.

\begin{theorem}[\cite{N}]
{\sf
\begin{eqnarray*}
B_d(n,k) & = & 1 - {n - (k-1)(d -1) \choose k} \ \frac{k!}{n^k} \ \ = \ \ 1 - \frac{(n - (k-1)(d -1))!}{(n - (k-1)(d -1) -k)!} \cdot \frac{1}{n^k},
\end{eqnarray*}
for $n \geq (k-1)(d-1) +k$ and $B_d(n,k) = 1$ otherwise.
}
\label{Naus}
\end{theorem}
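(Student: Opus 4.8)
The plan is to follow the complementation strategy already outlined before the statement: rather than counting configurations that contain a $\pm d$ collision directly, I would count those that contain \emph{none} and subtract from $1$. Since the $k$ variables are iid uniform on $[n]$, every one of the $n^k$ length-$k$ sequences $(x_1,\dots,x_k) \in [n]^k$ is equally likely, so it suffices to count the number $N$ of collision-free sequences; then $B_d(n,k) = 1 - N/n^k$.

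First I would reduce the sequence count to a set count. In any collision-free configuration all $k$ birthdays are distinct (coincidence on a single day is the $d=1$ instance of a collision, hence forbidden for every $d \geq 1$), so sorting gives a strictly increasing tuple $x_1 < x_2 < \cdots < x_k$, and conversely each collision-free $k$-subset of $[n]$ arises from exactly $k!$ ordered sequences. Hence $N = k!\,M$, where $M$ is the number of $k$-subsets of $[n]$ with no $\pm d$ collision. By the block discussion preceding the theorem, the defining property of such a subset is that consecutive members are separated by a block of at least $d-1$ empty days, i.e. $x_{i+1} - x_i \geq d$ for all $i \in [k-1]$.

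The heart of the argument is a bijection that removes these forced gaps. I would send the increasing tuple $(x_1,\dots,x_k)$ to $(y_1,\dots,y_k)$ defined by $y_i = x_i - (i-1)(d-1)$. A short check shows $y_{i+1} - y_i = (x_{i+1}-x_i) - (d-1) \geq 1$, so the $y_i$ are strictly increasing, and they lie in $\{1,\dots,n-(k-1)(d-1)\}$; the inverse map $x_i = y_i + (i-1)(d-1)$ restores a valid gapped tuple, so this is a bijection onto all $k$-subsets of an $(n-(k-1)(d-1))$-element ground set. Therefore $M = {n-(k-1)(d-1) \choose k}$ and $N = {n-(k-1)(d-1) \choose k}\,k!$, which yields both stated forms of $B_d(n,k)$ once the binomial-times-factorial is rewritten as the falling-factorial ratio $\frac{(n-(k-1)(d-1))!}{(n-(k-1)(d-1)-k)!}$. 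When $n < (k-1)(d-1)+k$ the ground set has fewer than $k$ elements, the binomial coefficient vanishes, $N=0$, and $B_d(n,k)=1$, matching the boundary case.

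I expect the main obstacle to be verifying the bijection cleanly, and in particular pinning down the direction and offset of the gap inequality. The collision threshold is phrased through a span of size $1+|x_i-x_{i+1}|$ and an empty block of size $d-1$, so I would want to confirm that ``no $\pm d$ collision'' translates to the minimum distance $x_{i+1}-x_i \geq d$ rather than $\geq d-1$, since an off-by-one here propagates through the entire count. A useful sanity check is the case $d=1$, where $(k-1)(d-1)=0$ and the formula must collapse to the classical $B(n,k)=1-{n \choose k}k!/n^k$, which it does.
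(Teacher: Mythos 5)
Your proposal is correct and follows the same route as the paper: complementation plus the observation that a collision-free configuration forces $k-1$ blocks of $d-1$ empty days between consecutive sorted birthdays, so the collision-free arrangements number ${n-(k-1)(d-1) \choose k}\,k!$ out of $n^k$. The only difference is one of rigor: the paper asserts this count directly from the block picture, whereas you justify it with the explicit gap-removal bijection $y_i = x_i - (i-1)(d-1)$, and your reading of ``no $\pm d$ collision'' as $x_{i+1}-x_i \geq d$ matches the paper's definition (ranges \emph{less than} $d$ days apart constitute a collision), so there is no off-by-one.
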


Using the bound $1 - x \leq e^{-x}$ on Naus' result gives $k$ of about $0.83\sqrt{\frac{n}{d-4}}$, see~\cite{N}.
Also~\cite{DM} approximate $k$ to about $1.2 \sqrt{\frac{n}{2d+1}}$ for the cyclic version.

Note, Theorem~\ref{Naus} with $d = 1$ gives the solution to the standard birthday problem of Definition~\ref{standardBirthday}.
That is, a span of $d = 1$ and blocks of size $d - 1 = 0$.

The falling factorial is
\begin{eqnarray*}
m^{\underline{k}}	& = & m(m-1) \cdots (m-k+1) \ \ = \ \ {m \choose k} k!
\end{eqnarray*}

In these terms, Theorem~\ref{Naus} may be expressed as $B_d(n,k) = 1 - \frac{(n - (k-1)(d -1))^{\underline{k}}}{n^k}$.

\noindent
The next classic result is important.

\begin{lemma}[Classical]
{\sf
Let $m \geq k \geq 1$.
The falling factorial $m^{\underline{k}}$ is the number of injective mappings of $k \geq 1$ elements to the range $[m]$.
}
\label{surjective}
\end{lemma}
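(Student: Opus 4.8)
The plan is to establish the count directly via the multiplication principle, processing the $k$ domain elements in a fixed order and counting the admissible image choices at each step. I would label the domain elements $1, 2, \ldots, k$ and construct an injective map $f \colon [k] \to [m]$ one value at a time. For $f(1)$ there are $m$ available targets in $[m]$. Having fixed distinct images $f(1), \ldots, f(i-1)$, injectivity forbids reusing any of these $i-1$ values, so exactly $m - (i-1)$ targets remain available for $f(i)$. Since the hypothesis $m \geq k$ guarantees $m - (i-1) \geq m - (k-1) \geq 1$ for every $i \in [k]$, each factor is a positive integer and no step is vacuous.

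By the multiplication principle, the total number of injective maps is the product of the per-step counts,
\begin{eqnarray*}
\prod_{i=1}^{k} \bigl( m - (i-1) \bigr) & = & m (m-1) \cdots (m - k + 1) \ \ = \ \ m^{\underline{k}},
\end{eqnarray*}
which is exactly the falling factorial as defined above, settling the claim. For completeness I would note a short alternative by induction on $k$: the base case $k = 1$ is immediate, since an injection from a single element is determined by its image and there are $m = m^{\underline{1}}$ choices; for the inductive step, one counts injections $[k] \to [m]$ by first choosing an injection $[k-1] \to [m]$ (there are $m^{\underline{k-1}}$ of these by hypothesis) and then assigning the $k$-th element a value among the $m - (k-1)$ elements not yet used, so that $m^{\underline{k-1}} (m - k + 1) = m^{\underline{k}}$ closes the induction.

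There is no genuine obstacle here; the only point demanding a moment's care is the role of the hypothesis $m \geq k$, which ensures every factor in the product is at least $1$ and hence that the count is both well defined and nonzero. Were $k > m$, the pigeonhole principle would force any map $[k] \to [m]$ to repeat a value, so no injection could exist, which is consistent with the falling factorial since in that regime one of its factors vanishes and $m^{\underline{k}} = 0$.
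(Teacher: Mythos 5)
Your argument is correct and complete. Note that the paper itself offers no proof of this lemma---it is stated as classical and left unproved---so there is nothing to diverge from; still, it is worth observing that the identity the paper records just above the lemma, $m^{\underline{k}} = {m \choose k}\,k!$, implicitly suggests a slightly different decomposition: first choose the $k$-element image set (${m \choose k}$ ways), then bijectively assign the $k$ domain elements to it ($k!$ ways). Your sequential multiplication-principle construction counts the same objects element by element rather than image-set first; the two factorizations are equivalent, and yours has the small advantage of making the role of the hypothesis $m \geq k$ explicit (each factor $m - (i-1)$ stays positive) and of extending gracefully to the degenerate case $k > m$, where a factor vanishes and the count is $0$, consistent with the pigeonhole principle. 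Your closing remark on that degenerate case, and the inductive variant, are both sound.
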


The next definition is based on~\cite{CN,NS,CDM}.

\begin{definition}[Boy-Girl Birthdays] 
{\sf
Consider $n$ days in a year and two sets of distinctly labeled iid uniform random variables all with range $[n]$: $g$ of these variables are girls and $b$ 
of these variables are boys.
Then $B(n,b,g)$ is the probability at least one girl and one boy have a birthday-collision.
}
\end{definition}

For instance, in $n$ days, $\displaystyle \argmin_{\substack{k = b+g\\b=g}} \left\{ B(n,b,g) \geq \frac{1}{2} \right\}$ gives the value $k = b+g$ 
and $b=g$ so there is a probability  of $\frac{1}{2}$ 
where at least one girl and one boy have the same birthday.

Stirling numbers of the second kind~\cite{GKP} count the number of non-empty partitions of a given set.
For example given the set $[m]$, the number of partitions of $[m]$ into $i$ non-empty subsets is 
${m \brace i}$.

Due to their nature, it is common to define Stirling numbers of the second kind recursively~\cite{GKP}:
${m \brace i} = i {m-1 \brace i} + {m-1 \brace i-1}$
with the base cases
${m \brace 1} = 1$ and ${m \brace m} = 1$.
Finally, ${m \brace m+i} = 0$ for any $i > 0$.
As an example,

\begin{eqnarray*}
\left\{ \begin{array}{c}
	       	3\\
		2
	\end{array} \right\}
& = &  \left| \Bigl\{ \bigl\{ \{ 1,2 \}, \{ 3 \} \bigr\}, \bigl\{ \{ 1,3 \}, \{ 2 \} \bigr\}, \bigl\{ \{ 1 \}, \{ 2, 3 \} \bigr\} \Bigr\} \right| \ \ = \ \ 3.
\end{eqnarray*}

The next classical equality counts the number of functions from $[n]$ elements to $[m]$ elements, $m \geq n$,
\begin{eqnarray}
m^n & = & \displaystyle \sum_{i=1}^{n} \left\{ \begin{array}{c}
						n\\
						i
						\end{array}
				\right\} m^{\underline{i}} \label{EQ1}
\end{eqnarray}

\noindent
expressed as the number of non-empty $i$ partitions of the $[n]$ elements and the number of surjections from the $i$ partitions by Lemma~\ref{surjective}.

\begin{theorem}[\cite{CN,NS}]
{\sf
Consider $n$ days in a year and two sets of distinctly labeled iid uniform random variables all with range $[n]$: $g$ random variables are girls and $b$ random 
variables are boys.
Then $B(n,b,g)$ is the probability at least one girl and at least one boy have a birthday-collision and
\begin{eqnarray*}
B( n, b, g ) & = & 1 - \frac{1}{n^{b+g}} \sum_{i=1}^{g}		(n-i)^{b}
								\left\{ \begin{array}{c}
	       			 	    			g\\
								i
								\end{array} \right\}
								n^{\underline{i}}.
\end{eqnarray*}
}
\label{BG}
\end{theorem}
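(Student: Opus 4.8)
The plan is to compute the complementary probability $1 - B(n,b,g)$, namely the probability that no boy shares a birthday with any girl, and then to subtract it from $1$ after normalizing by the size $n^{b+g}$ of the sample space of all independent assignments of the $b+g$ distinctly labeled variables to $[n]$. The event ``no boy--girl birthday-collision'' is equivalent to the statement that the set of days occupied by the girls is disjoint from the set of days occupied by the boys, so I would count favorable assignments directly rather than through inclusion--exclusion.

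The counting proceeds by stratifying over the number $i$ of \emph{distinct} days the girls occupy. Since there are $g$ girls, this number satisfies $1 \le i \le g$, and the strata for different $i$ are mutually disjoint, so the total favorable count is the sum of the per-stratum counts. Within the stratum indexed by $i$, I would build a favorable assignment in three independent steps: first choose the set of $i$ distinct days the girls will occupy, in $\binom{n}{i}$ ways; second, distribute the $g$ labeled girls onto these $i$ days so that every chosen day is used, which is exactly a surjection from a $g$-set onto an $i$-set and is counted by ${g \brace i}\, i!$ (partition the girls into $i$ nonempty blocks, then assign the blocks to the $i$ days), the same surjection count underlying Equation~\eqref{EQ1}; third, place the $b$ labeled boys on any of the remaining $n-i$ days, avoiding every girls' day, giving $(n-i)^b$ assignments. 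Multiplying and summing yields the favorable count $\sum_{i=1}^{g} \binom{n}{i}\, {g \brace i}\, i!\, (n-i)^b$, and the identity $\binom{n}{i}\, i! = n^{\underline{i}}$ from Lemma~\ref{surjective} rewrites this as $\sum_{i=1}^{g} n^{\underline{i}}\, {g \brace i}\, (n-i)^b$. Dividing by $n^{b+g}$ and subtracting from $1$ produces the claimed formula.

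The main obstacle is a matter of care rather than difficulty: justifying that the stratification is an exact partition of the favorable outcomes with no double counting. Here the key point is that the surjectivity requirement in the second step is precisely what forces the girls to occupy \emph{all} $i$ chosen days rather than a proper subset, so each favorable assignment determines a unique value of $i$ (the true number of distinct girls' days), a unique day-set, and unique girl- and boy-placements. I would also verify the summation range, since $i < 1$ is impossible when $g \ge 1$ and $i > g$ is impossible, matching the limits $1$ to $g$; and I would note that only the girls need be tracked by occupancy, while the boys contribute the simple power factor $(n-i)^b$ because their only constraint is to avoid the girls' day-set.
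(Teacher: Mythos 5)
Your argument is correct and complete. Note, though, that the paper does not actually prove Theorem~\ref{BG}: it is quoted from the literature (Crilly--Nandy and Nishimura--Sibuya) without proof, so there is no in-paper derivation to match against. Your occupancy argument --- stratify the no-collision assignments by the number $i$ of distinct days the girls occupy, count the girl placements as surjections via ${g \brace i}\, i!\,\binom{n}{i} = {g \brace i}\, n^{\underline{i}}$, and let each boy independently avoid those $i$ days to contribute $(n-i)^{b}$ --- is the standard one and is exactly the technique the paper does deploy when it proves the $\pm d$ generalization (Theorem~\ref{BG_Lower}). The one structural difference worth noting: in that later proof both genders must be partitioned into blocks (a double sum over $i$ and $j$ with the factor $(n-(i+j-1)(d-1))^{\underline{i+j}}$), because the $\pm d$ spacing constraint forces one to control where the boys' distinct days land as well; in your single-day setting the boys face no constraint beyond avoiding the girls' day-set, which is why the simple power factor $(n-i)^{b}$ suffices and only the girls need the Stirling-number bookkeeping. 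Your remarks on why the stratification is an exact partition (the value of $i$ is determined by the assignment, and surjectivity prevents an assignment from being counted under a smaller day-set) close the only place where double counting could arise, so nothing is missing.
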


The next Lemma is from~\cite{CN,W}.

\begin{lemma}[\cite{CN,W}]
{\sf
Consider $n$ days in a year and two sets of distinctly labeled iid random variables all with range $[n]$: $g$ random variables are girls and $b$ random variables are boys.
Then $B(n,b,g)$ is the probability
that at least one girl and at least one boy have a birthday-collision and
\begin{eqnarray*}
B( n, b, g ) & = & 1 - \frac{1}{n^{b+g}} \sum_{i=1}^{g} \sum_{j=1}^{b}		
								\left\{ \begin{array}{c}
	       			 	    			b\\
								j
								\end{array} \right\}
								\left\{ \begin{array}{c}
	       			 	    			g\\
								i
								\end{array} \right\}
								n^{\underline{i+j}}.
\end{eqnarray*}
}
\end{lemma}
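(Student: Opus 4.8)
The plan is to prove the identity by the same complementary-counting strategy used for Theorem~\ref{BG}, but now partitioning the boys as well as the girls. Since each of the $b+g$ distinctly labeled variables is assigned a day in $[n]$ independently and uniformly, there are $n^{b+g}$ equally likely outcomes. The event ``at least one boy and at least one girl share a birthday'' is the complement of the event $E$ that the set of days occupied by boys is disjoint from the set of days occupied by girls. So it suffices to count the outcomes in $E$ and show this count equals $\sum_{i=1}^{g}\sum_{j=1}^{b} {b \brace j}{g \brace i}\, n^{\underline{i+j}}$; dividing by $n^{b+g}$ and subtracting from $1$ then yields the claim.

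To count $E$, I would stratify by the number of distinct days used. Fix that the boys occupy exactly $j$ distinct days and the girls occupy exactly $i$ distinct days, where $1 \le j \le b$ and $1 \le i \le g$. The grouping of the $b$ boys into exactly $j$ nonempty blocks (one per day) is counted by the Stirling number ${b \brace j}$, and likewise ${g \brace i}$ counts the girl groupings. These $i+j$ blocks are all distinguishable, and the constraint defining $E$ is precisely that they be mapped to $i+j$ \emph{distinct} days of $[n]$: distinctness across the whole collection forces the boy-days and girl-days to be disjoint (the defining property of $E$) and simultaneously forces the days within each sex to be distinct (consistent with using exactly $j$ and $i$ days). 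By Lemma~\ref{surjective} the number of injections of $i+j$ items into $[n]$ is $n^{\underline{i+j}}$, which automatically vanishes once $i+j>n$. Multiplying the three factors and summing over all admissible $i,j$ gives the stated count for $E$.

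The main thing to get right is the claim that the single joint injection of all $i+j$ blocks both enforces disjointness and does not overcount. I would argue that each outcome in $E$ determines uniquely its boy-partition, its girl-partition, and the injective placement of the resulting blocks onto days, so that within each $(i,j)$ stratum the correspondence between outcomes in $E$ and triples (boy-partition, girl-partition, injection) is a bijection, and that distinct strata are disjoint. This bookkeeping is the only place where a factor could slip, so I would verify it carefully rather than rely on analogy with the single-sex case.

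As an independent check, I would reconcile the formula with Theorem~\ref{BG}. Applying (\ref{EQ1}) over a codomain of size $n-i$ gives $(n-i)^{b}=\sum_{j=1}^{b}{b \brace j}(n-i)^{\underline{j}}$, and the falling-factorial splitting identity $n^{\underline{i}}\,(n-i)^{\underline{j}} = n^{\underline{i+j}}$ lets me rewrite the summand $(n-i)^{b}\,{g \brace i}\,n^{\underline{i}}$ of Theorem~\ref{BG} as $\sum_{j=1}^{b}{b \brace j}{g \brace i}\,n^{\underline{i+j}}$. Summing over $i$ recovers the present lemma, confirming that the two expressions for $B(n,b,g)$ agree and serving as a sanity check on the combinatorial count.
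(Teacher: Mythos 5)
Your proof is correct. The paper itself gives no proof of this lemma---it is cited from the literature---but your complementary count (partition the boys into $j$ nonempty same-day blocks, the girls into $i$ blocks, and injectively place all $i+j$ blocks onto distinct days, giving ${b \brace j}{g \brace i}\,n^{\underline{i+j}}$ per stratum) is exactly the $d=1$ case of the argument the paper uses to prove the generalization in Theorem~\ref{BG_Lower}, and your bijection bookkeeping and the algebraic reconciliation with Theorem~\ref{BG} via $n^{\underline{i}}(n-i)^{\underline{j}}=n^{\underline{i+j}}$ are both sound.
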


\subsection{Wash sale Example 1: Same Day Purchase and Sale}
\label{Wash-sale-example-1}

Consider a portfolio $\Pi = \{ a_1, \cdots, a_k \}$ where $a_i: k \geq i \geq 1$ is asset (security) $i$ held in $\Pi$.
At the end of business on day $\ell$, consider portfolio $\Pi_{\ell} = \{ a_{1,\ell}, \cdots, a_{k,\ell} \}$ the market value of asset $i$ in $\Pi_{\ell}$ 
is $|a_{i,\ell}|$ and the total value of $\Pi_{\ell}$ is $|\Pi_{\ell}| = \sum_{i=1}^{k} |a_{i,\ell}|$.
Just before the start of each tax year, asset $i$ has market value $|a_{i,0}|$ and $\Pi$ has total market value $|\Pi_0|$.
Assume each asset is sufficiently liquid so our purchases or sales do not impact its market price.

Suppose portfolio $\Pi$ has $T$ total iid uniform and random transactions during the business days of one calendar year.
Assume trades are distributed on an asset-weighted basis from the initial weight of each asset 
in the portfolio just before the trading year commences.
Thus, just prior to the first trading day and with no other information, asset $a_{i}$ is expected to have $t(i) = T \frac{|a_{i,0}|}{|\Pi_0|}$ trades 
in one year.

Take $t(i)$ transactions and define the independent Rademacher\footnote{We used Bernoulli random variables for $\{ 0,1 \}$ outcomes and we use Rademacher for $\{-1, +1 \}$ 
outcomes.} 
random variables $\eta_1, \cdots, \eta_{t(i)}$ representing buys or sells of portions of 
asset class $i$ in portfolio $\Pi$:
\begin{eqnarray*}
\eta_j & = & \left\{ \begin{array}{c}
    			+1 \ \ \mbox{ if transaction $j$ is a buy of asset } i \\
			-1 \ \ \mbox{ if transaction $j$ is a sell of asset } i \\
		\end{array} \right.
\end{eqnarray*}

\noindent
for $j: t(i) \geq j \geq 1$.
That is, the $b$  independent Rademacher random variables where $\eta_j = +1$ represent buys (boys)  and the $g$ random variables where $\eta_j = -1$ represent sells (gals).

To apply a suitable version of Chernoff's bound~\cite[Appendix A]{AS} where
$\Prob[\eta_j = +1] = \Prob[\eta_j = -1] = \frac{1}{2}$, then for any $c > 0$

\begin{eqnarray*}
\Prob\left[ \left( \eta_1 + \cdots + \eta_{t(i)} \right) > c \right] & < & e^{-c^2/(2t(i))}.
\end{eqnarray*}
So, for example, take $c =1$, then $|b-g| \leq 1$ holds with high probability as $t(i)$ gets large.
Of course, as $t(i)$ gets large, the likelihood of wash sales increases.
That is, the total number of buys and sells is expected to converge to be about the same as the total number of transactions grows.
However, along the way, the number of buys or sells may not be as balanced~\cite{F,NV}.

Select the probabilities that the number of buys and sales are the same, given $t(i)$ total trades, in asset class $a_i$ are:

\begin{center}
\begin{tabular}{ l || c c c c c }
   $t(i)$ 		& 10 & 20 & 30 & 40 & 50 \\ \hline
   $e^{-1/(2t(i))}$	& 0.951 & 0.975 & 0.983 &	0.987 &	0.990 
 \end{tabular}
\end{center}

Let $h$ be half the total trades $t(i)$ in day $i$.
That is, $h \leftarrow t(i)/2$.
Assuming $n \in \{ 252, 365 \}$ trading days gives the probabilities of same-day girl-boy birthday collisions for a single asset-type as:

\begin{center}
\begin{tabular}{ l || c c c c c c c c }
   $h$			& 1		& 5		& 10		& 15		& 20		& 25		& 30	  & 35  \\ \hline
   $B( 252, h, h )$	& 0.0040	& 0.0946	& 0.3280	& 0.5909	& 0.7957	& 0.9162	& 0.9717  & 0.9921 \\
   $B( 365, h, h )$	& 0.0027	& 0.0663	& 0.2399	& 0.4605	& 0.6660	& 0.8196	& 0.9150 & 0.9650
 \end{tabular}
\end{center}

In fact, $B( 252, 13, 13 ) = 0.4891$ and $B( 252, 14, 14 ) = 0.5410$.
So, considering only equal numbers of sales and buys over $n=252$ days of the same asset type, 14 girls and 14 boys is the first case where
there is greater than a 50\% chance of a (same-day) boy-girl birthday collision.

Assuming the portfolio $\Pi$ already holds this single asset type, a boy-girl collision only is a necessary
condition for a wash sale.
A birthday collision must be accompanied by a sale at a loss and a repurchase of substantially the same security within~30 calendar days.

%
%
%

%
%
%

\section{General Wash Sales}
\label{General-Wash-Sales}

Necessary conditions are given here for wash sales where a purchase and sale are within $\pm d$ calendar days.
Since the purchase and sale are not known to be at a loss while keeping substantially the same portfolio
before and after the $\pm d$ birthday collision.

\begin{definition}[Boy-Girl $\pm d$ Birthdays]
{\sf
Consider $n$ days in a year, spans of $\pm d$ days, and two sets of distinctly labeled iid uniform random variables all with range $[n]$: $g$ random variables are girls and 
$b$ random variables are boys. Then $B_d(n,g,b)$ is the probability
at least one girl and one boy are mapped to less than $d$ days of each other.
}
\end{definition}

For example, starting with $n,d$ and $k=g+b$ and $g = b$, then $\displaystyle \argmin_k \left\{ B_d(n,\textstyle\frac{k}{2}, \textstyle\frac{k}{2}) \geq \frac{1}{2} \right\}$ 
gives $k$ so there is a probability  of $\geq \frac{1}{2}$ so at least one girl and one boy have $\pm d$-birthday collisions.

The next result is based on~\cite{N}, \cite{NS}, \cite{CN}, and~\cite{W}.

\begin{theorem}
{\sf
Consider $n$ days in a year, a span of $\pm d$ days, and two sets of distinctly labeled iid uniform random variables all with range $[n]$: $g$ random variables are girls
and $b$ random variables are boys. Then $B_d(n,g,b)$ is the probability
at least one girl and one boy have a $\pm d$ birthday-collision and:
\begin{eqnarray*}
B_d(n,g,b) & = & 1 - \frac{1}{n^{b+g}} \sum_{i=1}^{b} \sum_{j=1}^{g}  \left\{ \begin{array}{c}
	       			 	    			b\\
								i
								\end{array} \right\}
								\left\{ \begin{array}{c}
	       			 	    			g\\
								j
								\end{array} \right\}
								(n - (i+j-1)(d -1))^{\underline{i+j}}.
\end{eqnarray*}
}
\label{BG_Lower}
\end{theorem}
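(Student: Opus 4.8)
The plan is to compute the complementary probability---that no boy and girl fall within $d$ days of one another---and to show it equals $\frac{1}{n^{b+g}}\sum_{i=1}^{b}\sum_{j=1}^{g}\left\{{b\atop i}\right\}\left\{{g\atop j}\right\}(n-(i+j-1)(d-1))^{\underline{i+j}}$, so that $B_d(n,g,b)$ is one minus this quantity. Since the $b+g$ birthdays are iid uniform on $[n]$, the sample space has $n^{b+g}$ equally likely outcomes and it suffices to count the collision-free ones. I would follow the template of the same-day double-sum lemma of~\cite{CN,W} fused with the gap-reservation idea behind Naus (Theorem~\ref{Naus}). First I would record the block structure of an outcome: boys sharing a common birthday are grouped together, so the boys occupy some number $i$ of distinct days and induce a partition of the $b$ labeled boys into $i$ nonempty blocks, of which there are $\left\{{b\atop i}\right\}$ by the definition of the Stirling numbers of the second kind. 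Symmetrically the girls occupy $j$ distinct days and realize one of $\left\{{g\atop j}\right\}$ partitions. Summing over $i\in[b]$ and $j\in[g]$ partitions the collision-free outcomes according to the pair $(i,j)$, so it remains to count, for each fixed pair of partitions, the day-assignments to the $i+j$ blocks that create no boy--girl $\pm d$ collision.

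For this inner count I would pass from the $b+g$ labeled individuals to the $i+j$ labeled blocks: once the partitions are fixed, an outcome is determined by an injective assignment of days to the $i+j$ blocks, since distinct blocks of the same sex must receive distinct days and the collision-free requirement forbids a boy-block and a girl-block from lying within $d-1$ days of each other. Merging and sorting the $i+j$ block-days along $[n]$ and applying the Naus substitution $z_\ell=x_\ell-(\ell-1)(d-1)$ turns the separation constraints into the requirement that the $z_\ell$ form a strictly increasing sequence in $[\,n-(i+j-1)(d-1)\,]$; because the blocks are distinguishable this yields $\binom{n-(i+j-1)(d-1)}{i+j}(i+j)!=(n-(i+j-1)(d-1))^{\underline{i+j}}$ assignments, using Lemma~\ref{surjective} to read the falling factorial as the number of injective placements of the $i+j$ labeled blocks. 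Assembling the two partition counts with this placement count and dividing by $n^{b+g}$ then gives exactly the claimed sum.

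Two consistency checks would anchor the computation and can be folded into the write-up. Setting $d=1$ collapses every buffer, $(n-(i+j-1)\cdot 0)^{\underline{i+j}}=n^{\underline{i+j}}$, and the formula reduces precisely to the same-day boy--girl expression of~\cite{CN,W}; degenerating to a single sex recovers Naus' Theorem~\ref{Naus}. The hard part will be the inner placement count, and this is the step I expect to be the main obstacle: two same-sex blocks need only be distinct, whereas a boy-block and a girl-block must be at least $d$ apart, so the uniform $(i+j-1)(d-1)$-day reservation used by the Naus substitution must be justified as counting each collision-free block-assignment exactly once. I would handle this by exhibiting a sorted-merge bijection between collision-free assignments and strictly increasing sequences in the contracted range $[\,n-(i+j-1)(d-1)\,]$, and verifying that the contraction is a genuine bijection---in particular that no configuration with a boy--girl pair closer than $d$ is gained and no legitimate collision-free configuration is lost---which is the crux on which the equality rests.
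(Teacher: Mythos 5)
Your proposal mirrors the paper's own proof step for step --- complementation over the $n^{b+g}$ equally likely outcomes, the Stirling-number decomposition ${b \brace i}{g \brace j}$ into same-day blocks, and the Naus-style contraction counting $(n-(i+j-1)(d-1))^{\underline{i+j}}$ placements via Lemma~\ref{surjective} --- and you correctly isolated the crux. But the verification you defer cannot be carried out, because the sorted-merge contraction is a bijection onto placements in which \emph{every} pair of consecutive occupied days is at least $d$ apart, while the collision-free event constrains only boy--girl pairs. The map is therefore injective into, but not onto, the collision-free assignments: any outcome in which two boy-blocks (or two girl-blocks) sit on distinct days closer than $d$ is perfectly legal yet lies outside the image, so the inner count is a strict undercount whenever $d>1$ and some sex occupies at least two distinct days. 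Concretely, take $n=4$, $d=2$, $b=2$, $g=1$. The girl on day $1,2,3,4$ leaves $2,1,1,2$ admissible days for each boy independently, so there are $4+1+1+4=10$ collision-free outcomes among $4^{3}=64$; the double sum yields ${2 \brace 1}{1 \brace 1}\,3^{\underline{2}} + {2 \brace 2}{1 \brace 1}\,2^{\underline{3}} = 6 + 0 = 6$, missing the four outcomes with the boys on days $3$ and $4$ and the girl on day $1$, or the boys on days $1$ and $2$ and the girl on day $4$. What the argument actually establishes is only the inequality $B_d(n,g,b) \le 1 - \frac{1}{n^{b+g}}\sum_{i=1}^{b}\sum_{j=1}^{g} {b \brace i}{g \brace j}\,(n-(i+j-1)(d-1))^{\underline{i+j}}$, the sum being a lower bound on the number of collision-free mappings.

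You should also know that the paper's own proof makes exactly the leap you flagged: it reserves $(i+j-1)$ gaps of $d-1$ days ``with no boy or girl in them'' between \emph{all} consecutive occupied days, silently imposing a same-sex separation that the event does not require, and it never checks that the contraction is onto. So your instinct that the equality rests on this bijection was sound; the step is not merely hard but false. Note too that your two sanity checks pass for structural reasons and so cannot detect the failure: at $d=1$ nothing is reserved, and the one-sex degeneration is genuinely Naus's problem (Theorem~\ref{Naus}), where all pairs really must be separated; the breakdown first appears in mixed-sex configurations with $d>1$. An exact formula would have to sum over the interleaving pattern of boy-blocks and girl-blocks along $[n]$, reserving $d-1$ days only at sex changes while demanding mere distinctness between adjacent same-sex blocks; that count depends on the number of sex changes in the interleaving and is not captured by a single falling factorial of the displayed shape.
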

\begin{proof}
This proof calculates the probability of not having no boy-girl $\pm d$ birthday collisions.
That is, one minus the probability of no boy-girl $\pm d$ birthday collisions.
This gives the probability of at least one boy-girl $\pm d$ birthday collision.

Given $n$ days, a $\pm d$ span, and iid uniform random variables separated into $g$ (girls) random variables and 
$b$ (boys) random variables. Then the total number unconstrained mappings
of the $b$ and $g$ variables to $[n]$ is $n^{b+g}$ giving the denominator in front of the double sum.

The value $B_d(n,g,b)$ is not impacted if either any number of boys have the same birthday or separately any number of girls have the same birthday.
Rather $B_d(n,g,b)$ is impacted by boy-girl collisions.
Therefore, consider partitions of $b$ boys and $g$ girls.
To prevent the girls' partitions and boys' partitions from colliding into $\pm d$ spans of the same range,
count the number of places these $i$ and $j$ non-empty partitions may be mapped so there is no $\pm d > 1$ birthday-collision.
By Lemma~\ref{surjective} there are 
\begin{eqnarray}
(n - (i+j-1)(d -1))^{\underline{i+j}}
& = & {n - (i+j-1)(d -1) \choose i+j} (i+j)!
\label{EQ7}
\end{eqnarray}
\noindent
injective functions to $[n]$ for sets of $i \in [b]$ boys and sets of $j \in [g]$ girls with $(i+j-1)$ blocks of $(d-1)$ contiguous days with no
boy or girl in them.

Now, consider placing the $i$ and $j$ partitions in separate locations among the $(n - (i+j-1)(d -1))^{\underline{i+j}}$ function mappings to $[n]$.
The $i$ partitions of $[b]$ where each partition is in a different location and $j$ partitions of $[g]$ where each partition
is also in a different location by Equation~\ref{EQ7}.
That is, given $i \in [b]$ and $j \in [g]$, then the product ${b \brace i}{g \brace j}$
is the total number of injective mappings of boys to $i$ non-empty partitions and independently the number of injective mappings of girls to $j$ non-empty partitions.

This completes the proof.
\end{proof}

\subsection{Wash sale Example 2: $d = \pm 30$ Calendar Days}
\label{Wash-sale-example-2}

Start with the same setup as the previous wash sale example from subsection~\ref{Wash-sale-example-1}.

Let $h$ be half the total trades $t(i)$ in day $i$.
That is, $h \leftarrow t(i)/2$.
Assuming $n \in \{ 252, 365 \}$ trading days and $d = \pm 30$ calendar days gives the probabilities of 
girl-boy $\pm 30$-day birthday-collisions for a single asset type is:

\begin{center}
\begin{tabular}{ l || c c c c c c c c }
   $h$		      		& 1		& 2		& 3		& 4		  \\ \hline
   $B_{30}( 252, h, h )$	& 0.220		& 0.819		& 0.994		& 0.99998	  \\
   $B_{30}( 365, h, h )$	& 0.155		& 0.667		& 0.953		& 0.99840
 \end{tabular}
\end{center}

Consider only a single asset type.
The intuition behind these probabilities is straight-forward.
For instance, consider $n = 365$ days and to avoid boy-girl collisions each girl and boy must be separated by at least 30 days before and after their 
birthday from the other gender.
So the $365$ days may be broken into about six blocks of about 60 days.

\section{Wash Sale and Integral Capital Gains and Losses}
\label{VarCGCL}

Capital gains or capital losses may be rounded to the nearest integer for US tax calculations. Provided all trades are rounded.
Rounding drops the cents portion for gains whose cents portion is 50-cents or below.
Rounding adds a dollar to the dollar portion of gains whose cents portion is greater than 50 cents while dropping the cents portion.
Losses work the same way.
Gains and losses must all be rounded or none must be rounded.
So, from here on, let all gains or losses be integers.

Long term capital gains and losses are aggregated and at the same time short term capital gains and losses are aggregated. 
At the end of the tax year the long term and short term aggregates are added together to get the final capital gain or loss for taxation.

The focus here is capital gains or losses for capital assets that may have wash sales.
Wash sales are losses, but losses may offset gains.
The study of options and their associated premiums is classical~\cite{H} and we do not address it here.
So, option premiums are ignored.

In a portfolio, individual capital gain values and individual capital loss values are usually distinct.
Though rare, identical capital gains and capital losses are possible.
Identical capital gains or losses are possible for portfolios built using options.
We are ignoring option premiums.
That is, asset purchases may be done via the exercise of cash-covered American-style put options. Also asset sales may be done via
the exercise of American-style covered-call options.
In these cases with options that become in-the-money, a portfolio manager has no control of the asset sales or purchases or timing
of such trades.
See Figure~\ref{GainWashSale}.

Most often, put or call option strike prices are at discrete increments. For example, many put and call equity options have strike prices 
in \$5 or \$10 increments.
Suppose a portfolio is built only using the exercise of American-style options.
Many asset gains and losses may be for identical amounts.
Of course, this depends on the size of the underlying positions or the number of options written.
Options with the same expiry on identically sized underlying assets may have very different values~\cite{H}.

\begin{figure}[h]
\resizebox{\textwidth}{!}{%
\begin{tabular}{ l | l | l | l | l }
date $d_1$			& date $d_2$		&	date $d_3$		&	date $d_4$	&	date $d_5$	\\ \hline
Sell cash-covered 		& Put $p_1$ exercised	& Covered-call $c_1$ 		& Call $c_1$ exercised	& No wash sale		\\
put $p_1$ at strike 		& to `purchase' the 	& sold for strike  		& triggering a sale	&			\\
price \$100 on $d_2$		& asset for \$100	& price \$110 			& giving a capital 	&			\\
				&			& and expiry $d_4$		& gain of \$10		&			\\ \hline
Sell cash-covered 		& Put $p_2$ exercised	& Covered-call $c_2$		& Call $c_2$ exercised	& Sell cash-covered put	\\
put $p_2$ at strike 		& to `purchase' the	& sold for strike  		& triggering a sale	&  with strike price  	\\
price \$100 on $d_2$		& asset for \$100	& price \$90 and		& giving a capital 	& \$90 and if exercised	\\			
				&			& expiry on $d_4$		& loss of \$10		& within 30 days, then	\\
				&			&				&			& it is a wash sale
\end{tabular}}
\caption{A Potential Wash Sale with American-style Options}
\label{GainWashSale}
\end{figure}

In such option-based portfolios assume uniform, independent, and random capital gains and capital losses.
This may be modeled by the Littlewood-Offord Problem.

Definition~\ref{LO} is classical and extensive discussion may be found in the likes of~\cite{TV2006,TV}.
It is based directly on~\cite{LO,TV2006,TV}

\begin{definition}[Littlewood-Offord Problem]
{\sf
The integer Littlewood and Offord's problem is given an integer multi-set $V = \{ v_1, v_2, \cdots, v_n \}$ where 
$v_i \geq 1, \ \forall i \in [n]$ and $S_v = \xi_1 v_1 + \xi_2 v_2 + \cdots + \xi_n v_n$ so each $\xi_i$ is such that 
$\Prob[\xi_i = -1] = \Prob[\xi_i = +1] = \frac{1}{2}$, for $i \in [n]$, then what is $\max_{x \in \IZ} \Prob \left[ S_v = x \right]$?
}
\label{LO}
\end{definition}

Assuming equal probability of gains and losses and no drift~\cite{H}.
Given an integer multi-set $V = \{ v_1, v_2, \cdots, v_n \}$ so $v_i \geq 1, \forall i \in [n]$. The multi-set $V$ represents capital gains 
and capital losses.
Capital gains and capital losses are all from sales.
The iid Rademacher random variables $\xi_i \in \{ +1, -1 \}$ determine if a $v_i$ is a capital gain or loss.
All $v_i$ are positive since all the Rademacher variables have range $\{ -1, +1 \}$, see also \cite{E1945} and \cite{NV}.

Over a tax year, the total capital gain or loss is
\begin{eqnarray*}
S_v & = & \xi_1 v_1  + \xi_2 v_2  + \cdots + \xi_n v_n.
\end{eqnarray*}

In an optimal solution of this version of the Littlewood-Offord problem, \cite{E1945} showed
the $n$-element multi-set $V = \{ 1, 1, \cdots, 1 \}$ has $\max_{x \in \IZ} \left\{ \Prob \left[ S_v = x \right] \right\} = O\left( \frac{1}{\sqrt{n}} \right)$.

The next lemma's proof follows immediately from the linearity of expectation given Rademacher random variables.
See, for example,~\cite{AS}.

\begin{lemma}
{\sf
Consider any integer multi-set $V = \{ v_1, v_2, \cdots, v_n \}$ where $v_i \geq 1, \forall i \in [n]$ and the random variable 
$S_v = \xi_1 v_1 + \xi_2 v_2 + \cdots + \xi_n v_n$, 
where $\Prob[\xi_i = -1] = \Prob[\xi_i = +1] = \frac{1}{2}$, for all $i \in [n]$, then $\E[S_v] = 0$.
}
\label{Mean0}
\end{lemma}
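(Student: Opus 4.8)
The plan is to invoke the linearity of expectation to reduce the claim to a single elementary computation, namely the expectation of one Rademacher variable. First I would write $\E[S_v] = \E[\xi_1 v_1 + \cdots + \xi_n v_n]$ and, treating each $v_i$ as a fixed integer constant, pull the sum and the constants outside the expectation to obtain $\E[S_v] = \sum_{i=1}^{n} v_i\,\E[\xi_i]$. The key point enabling this step is that linearity of expectation holds for \emph{any} collection of random variables, independent or not, so the iid hypothesis is not actually needed here.

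Next I would evaluate $\E[\xi_i]$ directly from the definition of the Rademacher distribution. Since $\Prob[\xi_i = +1] = \Prob[\xi_i = -1] = \frac{1}{2}$, we get $\E[\xi_i] = (+1)\cdot\frac{1}{2} + (-1)\cdot\frac{1}{2} = 0$ for every $i \in [n]$. Substituting this into the previous expression yields $\E[S_v] = \sum_{i=1}^{n} v_i \cdot 0 = 0$, which is exactly the assertion of the lemma.

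There is essentially no obstacle to overcome: the statement is a one-line consequence of linearity of expectation together with the symmetry of the Rademacher distribution about $0$. The only point worth flagging is how little the conclusion relies on the structure of $V$ — it holds for arbitrary real constants $v_i$, with neither the independence of the $\xi_i$ nor the sign constraint $v_i \geq 1$ playing any role. Those hypotheses become essential only in the subsequent analysis, where one controls the \emph{concentration} of $S_v$ around this mean of zero via the Littlewood--Offord bound $\max_{x \in \IZ} \Prob[S_v = x] = O(1/\sqrt{n})$; the present lemma merely fixes the center about which that concentration occurs.
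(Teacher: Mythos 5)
Your proof is correct and matches the paper's own argument: the paper simply notes that the lemma ``follows immediately from the linearity of expectation given Rademacher random variables,'' which is exactly the computation you carry out, $\E[S_v] = \sum_{i=1}^{n} v_i \E[\xi_i] = 0$. Your added observation that independence and the constraint $v_i \geq 1$ are not needed here is accurate and a nice clarification, but it does not change the substance.
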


For any Rademacher random variable $\xi_i$, it must be $\E[\xi_i] = 0$ and $\E[\xi_i^2] = 1$.
Since $v_i$ is constant $\sigma^2_{\xi_i v_i} = \E[\xi_i^2 v_i^2] - \E[\xi_i v_i]^2 = v_i^2$.
Thus, a proof of the next theorem follows since the variance of a sum of independent random variables is the sum of the variances.

\begin{theorem}
{\sf
Consider any non-negative integer vector $v$ and the random variable $S_v = \xi_1 v_1 + \xi_2 v_2 + \cdots + \xi_n v_n$,
where $\Prob[\xi_i = -1] = \Prob[\xi_i = +1] = \frac{1}{2}$, for all $i \in [n]$, then $\E[S_v^2] = v_1^2 + v_2^2 + \cdots + v_n^2$
and $\sigma_{S_v} = \sqrt{v_1^2 + v_2^2 + \cdots + v_n^2}$.
}
\label{SumSquareTheorem}
\end{theorem}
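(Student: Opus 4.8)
The plan is to establish $\E[S_v^2] = \sum_{i=1}^n v_i^2$ by a direct expansion of the square followed by linearity of expectation, and then to read off $\sigma_{S_v}$ using the fact that $S_v$ already has mean zero.

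First I would expand the square to write $S_v^2 = \sum_{i=1}^n \xi_i^2 v_i^2 + \sum_{i \neq j} \xi_i \xi_j v_i v_j$, separating the diagonal from the off-diagonal cross terms. Applying linearity of expectation and pulling out the constant coefficients $v_i^2$ and $v_i v_j$, the diagonal contribution becomes $\sum_{i=1}^n v_i^2 \, \E[\xi_i^2]$, which equals $\sum_{i=1}^n v_i^2$ since each Rademacher variable satisfies $\E[\xi_i^2] = 1$. The key step is the vanishing of the cross terms: because the $\xi_i$ are independent, for $i \neq j$ we have $\E[\xi_i \xi_j] = \E[\xi_i]\,\E[\xi_j] = 0$, using $\E[\xi_i] = 0$. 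Hence every off-diagonal term drops out and $\E[S_v^2] = \sum_{i=1}^n v_i^2$, which is the first claim.

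For the standard deviation I would invoke the definition $\sigma_{S_v}^2 = \E[S_v^2] - (\E[S_v])^2$ together with Lemma~\ref{Mean0}, which gives $\E[S_v] = 0$. Substituting, $\sigma_{S_v}^2 = \E[S_v^2] = \sum_{i=1}^n v_i^2$, so taking the nonnegative square root yields $\sigma_{S_v} = \sqrt{v_1^2 + \cdots + v_n^2}$. Equivalently, one may bypass the expansion and use the per-summand variance $\sigma^2_{\xi_i v_i} = v_i^2$ recorded just before the theorem: since the summands $\xi_i v_i$ are independent, the variance of the sum equals the sum of the variances, giving $\sigma_{S_v}^2 = \sum_i v_i^2$ at once, and then $\E[S_v^2] = \sigma_{S_v}^2 + (\E[S_v])^2 = \sum_i v_i^2$.

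There is no genuine obstacle here, so I do not expect any difficulty to arise in the calculation. The only point requiring care is the appeal to independence when discarding the cross terms $\E[\xi_i \xi_j]$ for $i \neq j$; this is precisely where the hypothesis that the Rademacher variables are independent (and not merely identically distributed) is used, and it is the same hypothesis underlying the variance-additivity shortcut.
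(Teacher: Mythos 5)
Your proof is correct and, in substance, the same as the paper's: the paper simply notes $\sigma^2_{\xi_i v_i} = \E[\xi_i^2 v_i^2] - \E[\xi_i v_i]^2 = v_i^2$ and invokes additivity of variance for independent summands, which is exactly the shortcut you give in your second paragraph, while your direct expansion of $S_v^2$ with vanishing cross terms is just that argument written out in full. No gaps.
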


Thus, the lowest variance, $\sigma_{S_v}^2$, for the integer Littlewood-Offord problem occurs exactly when $V = \{ 1, 1, \cdots, 1 \}$ and
$|V| = n$.
Assuming the $\xi_i, \ \forall i \in [n]$ are all Rademacher random variables, then
$\Prob_{x \in \IZ} [S_v = x]$ is maximized~\cite{TV2006,TV,NV} as $O(1/\sqrt{n})$ and $\sigma_{S_v} = \sqrt{n}$.

Theorem~\ref{SumSquareTheorem} implies the next corollary.

\begin{corollary}
{\sf
Assume $1 = v_1 = v_2 = \cdots = v_n$ and $S_v = \xi_1 v_1 + \xi_2 v_2 + \cdots + \xi_n v_n$
where $\Prob[\xi_i = -1] = \Prob[\xi_i = +1] = \frac{1}{2}$, for all $i \in [n]$, 
then the standard deviation of $S_v$ is $\sigma_{S_v} = \sqrt{n}$.
}
\label{linearCorollary}
\end{corollary}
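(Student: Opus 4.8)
The plan is to invoke Theorem~\ref{SumSquareTheorem} directly with the constant vector $v = (1, 1, \ldots, 1)$. That theorem already establishes $\sigma_{S_v} = \sqrt{v_1^2 + v_2^2 + \cdots + v_n^2}$ for an arbitrary non-negative integer vector $v$ under exactly the Rademacher hypotheses assumed in the corollary, so no new probabilistic machinery is needed and the only remaining task is to evaluate the square-root expression under the specialization $v_1 = v_2 = \cdots = v_n = 1$.

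First I would check that the hypotheses of Theorem~\ref{SumSquareTheorem} are met: the vector $v = (1, \ldots, 1)$ is plainly a non-negative integer vector, and the assumption $\Prob[\xi_i = -1] = \Prob[\xi_i = +1] = \frac{1}{2}$ for all $i \in [n]$ is carried over verbatim from the corollary's own statement. With the hypotheses verified, I would substitute $v_i = 1$ into the conclusion of the theorem, observing that each summand satisfies $v_i^2 = 1$, so the sum $v_1^2 + v_2^2 + \cdots + v_n^2$ collapses to a sum of $n$ ones, namely $n$. Taking the square root then gives $\sigma_{S_v} = \sqrt{n}$, which is exactly the claimed standard deviation.

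There is essentially no obstacle here: the corollary is an immediate specialization of the preceding theorem, and the argument is a single substitution followed by the trivial identity $\sum_{i=1}^{n} 1 = n$. If anything, the only point worth stating explicitly—so that the reader sees why the result is not merely a restatement—is that the constant vector is precisely the configuration flagged in the discussion following Theorem~\ref{SumSquareTheorem} as minimizing the variance $\sigma_{S_v}^2$ over all integer multi-sets of size $n$, which is why this particular $\sqrt{n}$ value is the relevant one for the Littlewood--Offord lower bound on $\max_{x \in \IZ} \Prob[S_v = x]$.
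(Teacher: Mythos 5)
Your proposal is correct and matches the paper exactly: the paper simply states that Theorem~\ref{SumSquareTheorem} implies the corollary, and your substitution of $v_i = 1$ into $\sigma_{S_v} = \sqrt{v_1^2 + \cdots + v_n^2}$ to obtain $\sqrt{n}$ is precisely that specialization. No further comment is needed.
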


Corollary~\ref{linearCorollary} highlights an exceptional case where all capital gains and capital losses are the same.
Wash sales require the loss and gain to be from essentially the same security.

The generality of Theorem~\ref{SumSquareTheorem} asserts large variances too.
Consider the set $V = \{ 2^0, 2^1, \cdots, 2^{n-1} \}$,
then by Theorem~\ref{SumSquareTheorem}, $\sigma_{S_v}^2 = \sum_{i=0}^{n-1} 2^{2i} = \frac{2^{2n} -1}{3}$.
This last equality follows since the sum is a geometric series.

\begin{definition}[Distinct sums of a set or multi-set $V$]
{\sf
Consider a set or multi-set $V = \{ v_1, v_2, \cdots, v_n \}$ and
let each element of the lists $H_1 = \langle \hat{\xi}_{1,1}, \hat{\xi}_{2,1}, \cdots, \hat{\xi}_{n,1} \rangle$ and 
$H_2 = \langle \hat{\xi}_{1,2}, \hat{\xi}_{2,2}, \cdots, \hat{\xi}_{n,2} \rangle$ be fixed values from $\{ -1, +1 \}$.
The two sums of $V$,

\begin{eqnarray*}
s_{v,1} & = & \hat{\xi}_{1,1} \, v_1 + \hat{\xi}_{2,1} \, v_2 + \cdots + \hat{\xi}_{n,1} \, v_n\\
s_{v,2} & = & \hat{\xi}_{1,2} \, v_1 + \hat{\xi}_{2,2} \, v_2 + \cdots + \hat{\xi}_{n,2} \, v_n,
\end{eqnarray*}

\noindent
are {\em distinct} iff there is some $\hat{\xi}_{i,1} \neq \hat{\xi}_{i,2}$, for $i \in [n]$.
}
\end{definition}
%
%

Given any multi-set of positive integers $V = \{ v_1, v_2, \cdots, v_n \}$, enumerate all $2^n$ distinct sums
as $s_v[1] \geq s_v[2] \geq \cdots \geq s_v[2^n]$, for example, see Figure~\ref{Sv-binomial}.
Given any set of positive integers $V = \{ v_1, v_2, \cdots, v_n \}$, where none of the $2^n$ distinct sums add to the same value gives
$s_v[1] > s_v[2] > \cdots > s_v[2^n]$. 

An important observation by \cite{E1945}, is that for any fixed sum $s$ the values $s + v_{i}$ and 
$s - v_{i}$ differ by $2 v_{i}$.
Next, this observation is used to 
show the set $V = \{ 2^0, 2^1, \cdots, 2^{n-1} \}$ has no distinct sums that add to the same value.

In particular, take any distinct sums $s_{v,1}$ and $s_{v,2}$ with associated fixed values $\hat{\xi}_{i,1} \in \{ -1, +1 \}$ 
and $\hat{\xi}_{i,2} \in \{ -1, +1 \}$, respectively, for all $i \in [n]$.
Suppose, for the sake of a contradiction, that $s_{v,1} = s_{v,2}$.
Building on Erd\H{o}s' observation, the values $s_{v,1}$ and $s_{v,2}$ may be written as
$s_{v,1} = 2^{n} - 1 - 2 \, m_1$ where $m_1 = \sum_{i \in I_1} 2^{i-1}$ and $I_1 = \{ \, i : \hat{\xi}_{i,1} = -1 \, \}$
and likewise $s_{v,2} = 2^{n} - 1 - 2 \, m_2$ where $m_2 = \sum_{i \in I_2} 2^{i-1}$ and $I_2  = \{ \, i : \hat{\xi}_{i,2} = -1 \, \}$, for all $i \in [n]$.
Finally, the uniqueness of binary-number representations means $m_1 = m_2$ which in turn means $\hat{\xi}_{i,1} = \hat{\xi}_{i,2}$, for all $i \in [n]$.
So, in fact, the sums $s_{v,1}$ and $s_{v,2}$ are equal, giving a contradiction.

Thus, the set $V = \{ 2^0, 2^1, \cdots, 2^{n-1} \}$ satisfies the antecedent of the next theorem. 

\begin{theorem}
{\sf
Among all sets of distinct positive integers where no two distinct sums add to the same value, 
the set $V = \{ 2^0, 2^1, \cdots, 2^{n-1} \}$ has a minimal sum $s_v = v_1 + v_2 + \cdots + v_n = 2^{n} -1$.
}
\label{Distinct-Theorem}
\end{theorem}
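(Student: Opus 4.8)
The plan is to prove the claim in two halves: first that the powers of two attain total sum $2^{n}-1$ while satisfying the hypothesis, and second that no competing set can do strictly better. The first half is essentially already in hand from the discussion immediately preceding the statement, where the set $V = \{2^0, 2^1, \ldots, 2^{n-1}\}$ was shown to have no two distinct sign assignments producing equal values, and where its total is the geometric series $\sum_{i=0}^{n-1} 2^{i} = 2^{n}-1$. Hence the real burden is the matching lower bound: every set of $n$ distinct positive integers whose $2^{n}$ signed sums are all distinct must have total sum at least $2^{n}-1$.

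The key step is to recast the ``distinct signed sums'' hypothesis as a ``distinct subset sums'' condition and then count, building on the Erd\H{o}s observation already used above that flipping one sign changes a sum by $2v_i$. Fix any admissible set $V = \{v_1, \ldots, v_n\}$ and write $s = v_1 + \cdots + v_n$. For a fixed sign vector with entries $\hat{\xi}_i \in \{-1,+1\}$, let $T = \sum_{i : \hat{\xi}_i = -1} v_i$ be the subset sum over the negatively signed indices; then the corresponding signed sum equals $s - 2T$. Since the assignment of a sign vector to its negatively signed index set is a bijection onto subsets of $[n]$, and the affine map $T \mapsto s - 2T$ is injective, the $2^{n}$ signed sums are pairwise distinct exactly when the $2^{n}$ subset sums $T$ are pairwise distinct.

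Now I would finish with a pigeonhole bound. Each subset sum $T$ is a non-negative integer no larger than $s$, so all $2^{n}$ of them lie in $\{0, 1, \ldots, s\}$, a set of $s+1$ elements. Distinctness then forces $2^{n} \le s+1$, that is $s \ge 2^{n}-1$. Combining this with the fact that the powers of two realize $s = 2^{n}-1$ shows that $2^{n}-1$ is simultaneously a lower bound over all admissible sets and is attained by $V = \{2^0, \ldots, 2^{n-1}\}$, which is precisely the claim.

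The argument has no substantial obstacle; the only point that needs care is the translation step, namely confirming that the signed-sum condition of the hypothesis (phrased via sign vectors in the preceding Definition) is genuinely equivalent to distinctness of the subset sums $T$, so that the clean range $\{0,\ldots,s\}$ and the pigeonhole inequality $2^{n} \le s+1$ are legitimate. Once that equivalence is pinned down, the counting conclusion is immediate.
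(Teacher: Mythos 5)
Your proof is correct, and it proves exactly what the paper's proof proves: the lower bound $s \geq 2^{n}-1$ for every admissible set, with attainment by the powers of two supplied (in both cases) by the discussion immediately preceding the theorem. The underlying count is identical --- $2^{n}$ pairwise distinct values confined to a set of size $s+1$ --- but the packaging differs. The paper stays entirely in the signed-sum world and argues by contradiction: it sorts the $2^{n}$ sums, observes that any two distinct sums differ by a nonzero even integer because $\hat{\xi}_{i,1} - \hat{\xi}_{i,2} \in \{0, -2, +2\}$, concludes that the extremes must be at least $2(2^{n}-1)$ apart, and contradicts $s_v[1] - s_v[2^n] = 2s \leq 2^{n+1}-4$, which follows from supposing $s \leq 2^{n}-2$. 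You instead pass through the bijection between sign vectors and subsets of $[n]$, rewrite each signed sum as $s - 2T$ with $T$ the subset sum over the negatively signed indices, and apply a direct pigeonhole to the $2^{n}$ distinct integers $T \in \{0, 1, \ldots, s\}$. The two arguments are mathematically equivalent --- the factor of $2$ in $s - 2T$ is precisely the evenness of differences the paper exploits --- but your version is direct rather than by contradiction, and it makes explicit that the hypothesis is the classical Erd\H{o}s distinct-subset-sums condition, a connection the paper leaves implicit. The equivalence you flagged as needing care does hold exactly as you describe: the sign-vector-to-subset correspondence is a bijection and $T \mapsto s - 2T$ is injective, so distinctness of the $2^{n}$ signed sums is the same as distinctness of the $2^{n}$ subset sums.
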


\begin{proof}
Suppose, for the sake of a contradiction, that $s_v = v_1 + v_2 + \cdots + v_n < 2^n -1$ for some set of distinct positive integers $V = \{ v_1, v_2, \cdots, v_n \}$
where no two distinct sums add to the same value.

Take the next enumeration of the $2^n$ distinct sums, $s_v[1] > s_v[2] > \cdots > s_v[2^n]$, and
by our supposition, $2^n -2 \geq s_v[1]$ and $s_v[2^n] \geq -2^n +2$, so $s_v[1] - s_v[2^n] \leq 2^{n+1} - 4$.

Let $\{ \, s_{v,1}, s_{v,2} \} \subseteq \{ \, s_v[1], s_v[2], \cdots, s_v[2^n]  \, \}$
where sum $s_{v,1}$ has the list of fixed values $H = \langle \hat{\xi}_{1,1}, \hat{\xi}_{2,1}, \cdots, \hat{\xi}_{n,1} \rangle$
so that $s_{v,1} = \langle v_1, v_2, \cdots, v_n \rangle \cdot H$, where $\cdot$ is the vector dot product.
Likewise, the sum $s_{v,2}$ has the list of fixed values $\langle \hat{\xi}_{1,2}, \hat{\xi}_{2,2}, \cdots, \hat{\xi}_{n,2} \rangle$.

The difference of any two distinct sums $s_{v,1} -  s_{v,2}$ 
must be even since any fixed values $\hat{\xi}_{i,1} \in \{ -1, +1 \}$ 
and $\hat{\xi}_{i,2} \in \{ -1, +1 \}$, for $i \in [n]$, are so that,
\begin{eqnarray*}
\hat{\xi}_{i,1} - \hat{\xi}_{i,2} \in \{ \ 0, -2, +2 \ \},
\end{eqnarray*}
giving
\begin{eqnarray*}
s_{v,1} -  s_{v,2} & = & \sum_{i=1}^{n} v_i \left( \hat{\xi}_{i,1} - \hat{\xi}_{i,2} \right)
\end{eqnarray*}
which must be even.

Starting from $s_v[1]$ and going to $s_v[2^n]$ contains $2^n-1$ intervals.
Since all $s_v[i]$, for $i \in [2^n]$, are different and their differences must be even
so $s_v[1] - s_v[2^n]$ spans at least $2 (2^{n} -1) = 2^{n+1} -2$. 
That is, $s_v[1] - s_v[2^n] \geq 2^{n+1} -2$.
This gives a contradiction of the assumption $s_v[1] - s_v[2^n] \leq 2^{n+1} - 4$, completing the proof.
\end{proof}

Given a set of distinct positive integers $V$ where $|V| = n$, Theorem~\ref{Distinct-Theorem} indicates that 
$\max_{x \in \IZ} \left\{ \Prob [S_v = x] \right\} \leq \frac{1}{2^{n}}$.
So in the case where all distinct sums of $V$ add to different values, erasing a wash sale loss may have a very large impact.
In particular, the multi-set $V = \{ 1, 1, \cdots, 1 \}$ has largest loss $s_v[2^n] = -n$, where 
Theorem~\ref{Distinct-Theorem} indicates $V = \{ 2^0, 2^1, \cdots, 2^{n-1} \}$ has the largest loss $s_v[2^n] = -2^n +1$.
In this case, when no distinct sums add to the same value, let $U = \{ 2^n - (2i -1) : i \in [2^{n}] \}$
giving $\max_{x \in U} \left\{ \Prob [S_v = x] \right\} = \frac{1}{2^{n}}$. 
Assuming wash sales occur with the same random and uniform probability among all losses, the expected disallowed loss
is $\frac{2^n -1}{n}$.
This is because all losses are of the form $-(2^{i-1})$, for $i \in [n+1]$, and by assumption these losses all have the same 
probability of occurring.

%
%

Since $\E[S_v] = 0$ by Lemma~\ref{Mean0}, Littlewood-Offord results are useful for understanding 
likely values for $S_v$.
That is, $\max_{x \in \IZ - \{ 0 \}} \left\{ \Prob [S_v = x] \right\}$ gives most likely capital gains or losses
outside of the expected value $\E[S_v] = 0$.
None of the $s_v$ values in Figure~\ref{Sv-binomial} are~$0$, but if $V$ has an even number of $1$s, then the most common value is~$0$.

The following tail bound is given by \cite{SJMS} where $\| v_1, v_2, \cdots, v_n \|_2 = \sqrt{v_1^2 + v_2^2 + \cdots + v_n^2}$,

\begin{eqnarray*}
\Prob \left[ \sum_{i=1}^{n} \xi_i v_i > t \| v_1, v_2, \cdots, v_n \|_2 \right]	& \leq & e^{-t^2/2}\\
\Prob \left[ S_v > t \sqrt{v_1^2 + v_2^2 + \cdots + v_n^2} \right] 		& \leq & e^{-t^2/2}\\
\Prob \left[ S_v > t \sigma_{S_v} \right]  	    	   			& \leq & e^{-t^2/2}
\end{eqnarray*}

\noindent
Since by Theorem~\ref{SumSquareTheorem}, $\sigma_{S_v} = \sqrt{v_1^2 + v_2^2 + \cdots + v_n^2}$.

Suppose $V = \{ 1, 1, \cdots, 1 \}$ and $|V|$ is odd.
Since no sum of $V$ is $0$, there are $\frac{n}{2}$ capital gains and $\frac{n}{2}$ capital losses.
This means if $S_v = t \sigma$, then there are $\frac{n}{2} + \frac{t \sigma}{2}$ capital gains and $\frac{n}{2} - \frac{t \sigma}{2}$ 
capital losses.
Losses are necessary for wash sales.
Therefore, the bound $\Prob \left[ S_v > t \sigma_{S_v} \right] \leq e^{-t^2/2}$ gives the probability there are at least $\frac{t \sigma_{S_v}}{2}$ more gains 
than losses. That is, there are $\frac{t \sigma_{S_v}}{2}$ fewer opportunities for wash sales.

\begin{figure}
\begin{center}
\begin{tabular}{|l|l|} \hline
$V =\{ \, 1,1,1 \, \}$		& $s_v$    	\\ \hline
$+1+1+1$			& $s_v[1] = 3$  \\
$-1+1+1$			& $s_v[2] = 1$  \\
$+1-1+1$			& $s_v[3] = 1$  \\
$+1+1-1$			& $s_v[4] = 1$  \\
$-1-1+1$			& $s_v[5] = -1$ \\
$-1+1-1$			& $s_v[6] = -1$ \\
$+1-1-1$			& $s_v[7] = -1$ \\
$-1-1-1$			& $s_v[8] = -3$ \\
\hline
\end{tabular}
\end{center}
\caption{\bf The case where $v_1 = v_2 = v_3 = 1$ and $s_v$ is made of ${3 \choose 0}, {3 \choose 1}, {3 \choose 2}, {3 \choose 3}$ elements of $3,1,-1,-3$, respectively}
\label{Sv-binomial}
\end{figure}

Following Figure~\ref{Sv-binomial}, given $|V| = n$ then $s_v[1] = n$ is the case with zero capital losses. Likewise, $s_v[2^n] = -n$ is the case with
zero capital gains.
By Lemma~\ref{Mean0}, since $\E[S_v] = 0$ and $s_v[1] + \cdots + s_v[2^n] = 0$, thus $-n = s_v[2] + \cdots + s_v[2^n]$.
Also suppose a single wash sale disallows a capital loss among all identical capital gains and losses.
The single wash sale disallows a single capital loss giving the expected capital gain or loss:
\begin{eqnarray*}
\frac{(s_v[2] + 1) + (s_v[3] + 1) + \cdots + (s_v[2^n] +1)}{2^{n}-1}.
\end{eqnarray*}
The term $s_v[1]$ is excluded since it has no losses, hence no wash sales.

The  boy-girl $\pm 30$ birthday problem gives a necessary condition for wash sales of substantially identical securities.
Recall $B_{30}(252,g,b)$ is the probability of at least one boy-girl $\pm 30$ birthday collision, so $1- B_{30}(252,g,b)$ is the probability of
no such birthday collision.

Given any number of boy-girl $\pm 30$ birthday collisions of the same security and suppose these birthday collisions produce at most a single wash sale.
In this case let $G$ be a total taxable gain or loss where all gains and losses are the same. Suppose these gains and losses are all~$1$.
This gives,

\begin{eqnarray*}
\E[ \, G \, ] & = & \left(1 - B_{30}(252,g,b)\right) \frac{s_v[1] + s_v[2] + \cdots + s_v[2^n]}{2^n} \\
		&  & \	+ \ B_{30}(252,g,b) \frac{(s_v[2] + 1) + \cdots + (s_v[2^n] +1)}{2^n-1} \\
		& = & \frac{1 - B_{30}(252,g,b)}{2^n-1}(0) + \frac{B_{30}(252,g,b)}{2^n-1}\left( 2^n-1 -n \right)\\
		& = & B_{30}(252,g,b)\left( 1 - \frac{n}{2^n -1} \right).
\end{eqnarray*}

\section{Conclusions and further directions}

Wash sales may be modeled in a number of ways. These include variations of the birthday problem and the capital gains of portfolios and
wash sales impact may be modeled using the Littlewood-Offord problem.

The $k$-armed bandit, see for example~\cite{Robbins} or~\cite{KV}, etc., appears 
to apply to wash sales and the birthday problem.
Robbins' discussion of maximizing expected value of sums of random variables selected from different distributions is applicable
to constructing portfolios by writing options.

\section{Acknowledgement}

Thanks to Noga Alon for insightful comments.

\pagebreak

%
%
%
%
%
%


%
%
%
%

%
%
\begin{thebibliography}{10}
%
%

%
\bibitem{vM} von Mises, R.,
	     \"Uber Aufteilungs--und Besetzungs-Wahrscheinlichkeiten,
	{\em Revue de la Facult\'e des Sciences de l'Universit\'e d'Istanbul}, 1939, {\bf  4}, 145-163.

%
\bibitem{F} Feller, W., 
		{\em An Introduction to Probability Theory and Its Applications},
	    3rd Ed., 1968 (John Wiley: Hoboken, NJ).

%
%
\bibitem{CLRS} Cormen, T., Leiserson, C., Rivest, R., and Stein, C.,
        {\em Introduction to Algorithms}, 2nd Edition, 2001
	(MIT Press, Cambridge, MA).

%
\bibitem{LO}  Littlewood, J. and Offord, A.,
	      On the number of real roots of a random algebraic equation (III),
	      {\em Rec. Math. (Mat. Sbornik) N.S.}, 1943, {\bf 12(54)(3)}, 277-286.

%
\bibitem{E1945} Erd\H{o}s, P.,
		   On a lemma of Littlewood and Offord,
		   {\em Bull. of the Am. Math. Soc.},
		   1945, {\bf 51}, 898-902.

%
\bibitem{TV} Tao, T. and Vu, V., A Sharp Inverse Littlewood-Offord Theorem,
	{\em Random Structures \& Algs}, 2010, {\bf 37(4)}, 525-539.

%
\bibitem{IRS550} US Internal Revenue Service,
	     Investment Income and Expenses (Including Capital Gains and Losses),
	     IRS Publication 550, Cat. No. 15093R for 2014 Tax returns.
		 See pages 59+.
	    Available online at: https://www.irs.gov/pub/irs-pdf/p550.pdf
	    (accessed 22 Nov. 2015).

%
\bibitem{FINRA} FINRA regulatory notice 11-35, effective 8-Aug-2011 (2011).
	    Available online at: https://www.finra.org/sites/default/files/NoticeDocument/p124062.pdf 
	    (accessed 22 Nov. 2015).

%
\bibitem{JM} Jensen, B. and Marekwica, M.,
	     Optimal Portfolio choice with wash sale constraints,
	     {\em J. of Econ. Dyn. and Cont.}, 
	     2011, {\bf 35(11)}, 1916-1937.

%
\bibitem{H}  Hull, J.,
	      {\em Options, Futures, and other Derivatives},
	      9th Ed., 2015
	     (Pearson: New York, NY).

%
\bibitem{C83}
	Constantinides, George M., 
	Capital Market Equilibrium with Personal Tax,
	  {\em Econometric-a},
	   1983, 51(3), 611—636.
%
\bibitem{C84} Constantinides, George M.,
	Optimal stock trading with personal taxes: Implications for prices and the abnormal January returns,
 	{\em Journal of Financial Economics}, 1984, 13(1), 65-89.

%
\bibitem{DSZ} Dammon, Robert,  Spatt, C., and Zhang, H.,
	Optimal Consumption and Investment with Capital Gains Taxes,
	{\em Review of Financial Studies},  2001, 14(3), 583-616.

%
\bibitem{PWorld} Pat's world Math Blog (2012): 
	     http://pballew.blogspot.com/2011/01/who-created-birthday-problem-and-even.html, dated Friday, 14 January 2011, retrieved 16-Nov-2012.

%
\bibitem{BC} Ball, W., and Coxeter, H.,
	      {\em Mathematical Recreations and Essays}, 13th ed., 1987, 45-46,
	      (Dover: New York, NY).

%
\bibitem{N} Naus, J., 
	      An Extension of the Birthday Problem,
	    {\em The Am. Statistician},
	    1968, {\bf 22(1)} , 27-29.

%
%
\bibitem{AM} Abramson, M. and Moser, W.,
		  More Birthday Surprises,
		  {\em Am. Math. Monthly}, 
		  1970, {\bf 77}, 856-858.

%
\bibitem{CN} Crilly, T. and Nandy, S.,
	     The birthday problem for boys and girls,
	     {\em The Mathematical Gazette},
	     1987, {\bf 71(455)}, 19-22.

%
\bibitem{P} Pinkham, R.,
	    A Convenient Solution to the Birthday Problem for Girls and Boys,
	    {\em The Math. Gazette},
	    1988,
	    {\bf 72(460)}, 129-130.

%
%
\bibitem{NS}  Nishimura, K., and Sibuya, M.,
	Occupancy with two types of balls,
	{\em Annals of the Institute of Statistical Mathematics},
	1988, {\bf 40(1)}, 77-91.

%
\bibitem{GH} Galbarith, S., and Holmes, M.,
	     A Non-Uniform Birthday Problem with Applications to Discrete Logarithms,
	     {\em Discrete Applied Mathematics},
	      2012,
	     {\bf 160(10-11)},
     	     1547-1560.

%
\bibitem{S} Selivanov, B., 
	    On the waiting time in a scheme for the random allocation of colored particles,
	    {\em Discrete Math Appl.}, 1995, 5(1), 73-82.

%
%
%
\bibitem{CDM} Chatterjee, S., Diaconis, P., and Meckes,  E.,
	      Exchangeable pairs and Poisson approximation,
	      2005, {\em Prob. Surveys}, {\bf 2}, 64-106.

%
%
%
\bibitem{GKP} Graham, R., Knuth, D., and Patashnik, O.,
	      {\em Concrete Mathematics: A Foundation for Computer Science},
	      1994
	      (Addison Wesley: Boston, MA).

%
\bibitem{W} Wendl, M.,
	 Collision probability between sets of random variables,
	{\em Stat. \& Prob. Letters},  2003, {\bf 64(3)}, 249-254.

%
%
%
\bibitem{AGG} Arratia, R., Goldstein, L., and Gordon, L., %
	Poisson Approximation and the Chen-Stein Method,
	{\em Statist. Sci.},
	1990,
	{\bf  5(4)}, 403-424.

%
\bibitem{DG}  DasGupta, A.,
		  The matching, birthday and the strong birthday problem: A contemporary review,
		  {\em J. of Stat. Plan. and Inf.}, 
		  2005, {\bf 130}, 377-389.

%
%
\bibitem{BPSW} Bradford, P., Perevalova, I., Smid,  M. and  Ward, C.,
    	Indicator Random Variables in Traffic Analysis and the Birthday Problem.
	{\em The 31st Annual IEEE Conference on Local Computer Networks (LCN 2006)},
	Tampa, FL USA, IEEE Press, 1016-1023, Nov. 2006.

%
\bibitem{Stinson} Stinson, D., {\em Cryptography Theory and Practice}, 3rd Ed., 2005
		 (CRC Press: Boca Raton, FL).

%
%
%
%
\bibitem{SGMG} Song, R., Green, T., McKenna, M., and Glynn M.,
	Using Occupancy Models to Estimate the Number of Duplicate Cases in a Data System without Unique Identifiers,
	{\em Journal of Data Science},  2007, {\bf 5}, 53-66.

%
\bibitem{DM} Diaconis, P. and Mosteller F.,
		  Methods for Studying Coincidences,
		  {\em J. of the Am. Stat. Assoc.}, 
		  1989, {\bf 84(408)}, 853-861.

%
\bibitem{DG_m} DasGupta, A., 
		  Sequences, Patterns, and Coincidences,
		  2004, {\em Manuscript}.
		  Available online at: {\em http://citeseerx.ist.psu.edu/viewdoc/summary?doi=10.1.1.121.8367},
		  (accessed 22 Nov. 2015).

%
\bibitem{R} Rapoport, A., 
	    Birthday Problems: A Search for Elementary Solutions,
	    {\em The Mathematical Gazette},
	    1998, {\bf 82(493)}, 111-114.

%
%
\bibitem{FGT} Flajolet, P., Gardy, D., and Thimonier, L.,
	Birthday Paradox, Coupon Collectors, Caching Algorithms and Self-Organizing Search,
	{\em Discrete Applied Mathematics}, 1992, {\bf 39(3)}, 207-229.

%
\bibitem{MR95} Motwani, R., and Raghavan, P.,
	{\em Randomized Algorithms},
	1995
	(Cambridge University Press: Cambridge, UK).

%
\bibitem{MU} Mitzenmacher, M., and Upfal, E.
		      {\em Probability and Computing}, 2005
	(Cambridge University Press: Cambridge, UK).

%
%
\bibitem{BHS} Blom, G., Holst, L., and Sandell, D.,
	{\em Problems and snapshots from the world of probability}, 
	1994
	(Springer: New York, NY).

%
\bibitem{P85} Pinkham, R.,
	    69.32 The birthday problem; pairs and triples,
	    {\em The Math. Gazette},
	    1985,
	    {\bf 69(450)}, 279.

%
%
\bibitem{AS} Noga Alon and Joel H. Spencer (2008): {\em The Probabilistic Method}, 3rd Edition, J. Wiley, 2008.

%
\bibitem{NV} Nguyen, H., and Vu, V.,
		Small ball probability, Inverse theorems, and applications,
		{\em http://arxiv.org/abs/1301.0019 arXiv:1301.0019}.

%
\bibitem{TV2006} Tao, T. and Vu, V., 
	{\em Additive Combinatorics},
	2006
	(Cambridge University Press: Cambridge, UK).

%
\bibitem{SJMS} Montgomery-Smith, S.,
	      The distribution of Rademacher sums,
	      {\em Proc. Amer. Math. Soc.},
	      1990, {\bf 109}, 517-522.

%
\bibitem{Robbins} Robbins, H., 
		 Some aspects of the sequential design of experiments,
		 {\em Bull. Amer. Math. Soc.}, 
		 1952, {\bf 58}, 527-535.

%
\bibitem{KV} Katehakis, M., and Veinott Jr., A.,
	The multi-armed bandit problem: Decomposition and computation,
	{\em Mathematics of Operations Research},
	1987, {\bf 12}, 262-268.

%
%

\end {thebibliography}

\end{document}